\documentclass[conference,letterpaper]{IEEEtran}

\addtolength{\topmargin}{9mm}

%
%
\usepackage[utf8]{inputenc} 
\usepackage[T1]{fontenc}
\usepackage{url}
\usepackage{ifthen}
\usepackage{cite}
\usepackage[cmex10]{amsmath} 
\usepackage{cite}
\usepackage{latexsym}
\usepackage{amsthm}
\usepackage{caption}
\usepackage{amssymb}
\usepackage{mdframed}
\usepackage{graphicx}
\usepackage[colorlinks=true, allcolors=blue] {hyperref}
\usepackage{url}
\usepackage{comment}
\usepackage{algorithm, algorithmic}
\usepackage{mathtools}
\makeatletter
\def\th@theoremstyle{\thm@headfont{\bfseries}}
\makeatother
\theoremstyle{theoremstyle}  
\newtheorem{theorem}{Theorem} 
\newtheorem{lemma}[theorem]{Lemma} 
\newtheorem{definition}[theorem]{Definition}

\newtheorem{proposition}[theorem]{Proposition}
\newtheorem{remark}[theorem]{Remark}


\interdisplaylinepenalty=2500 

\hyphenation{op-tical net-works semi-conduc-tor}

\begin{document}
\title{Channel Resolvability Using\\ Multiplicative Weight Update Algorithm}

\author{\IEEEauthorblockN{Koki Takahashi and Shun Watanabe}
\IEEEauthorblockA{Department of Computer Sciences,
Tokyo University of Agriculture and Technology, Japan, \\
E-mail:s251249z@st.go.tuat.ac.jp; shunwata@cc.tuat.ac.jp}}

\maketitle


\begin{abstract}
We study channel resolvability, which is a key problem to proving a strong converse of identification via channel. The literature has only proved channel resolvability using random coding. We prove channel resolvability using the multiplicative weight update algorithm, the first approach to channel resolvability using non-random coding.

\end{abstract}

\section{Introduction}
Shannon introduced random coding to prove channel coding \cite{Shannon48}. Random coding is the most widely used approach in information theory. 
Feinstein also proved channel coding by introducing maximal code construction \cite{Feinstein59}. Feinstein's non-random coding approach has intensified our understanding of channel coding as seen in \cite{Csiszar-Korner11,Han03_Information-Spectrum,Ash65,Blackwell_Breiman_Thomasian59,Ogawa_Nagaoka07,Winter99}. 

Ahlswede and Dueck proposed identification via channel \cite{Alswede_Dueck89}, which ensures the message can be identified whether coming or not, but cannot be restored. 
In the transmission problem,  exponentially many messages of the block length can be transmitted
; on the other hand, in the identification problem, doubly exponentially many messages can be identified. 

Han and Verd{\'u} proved the strong converse of identification \cite{Han_Verdu92}, and summarized the concept of proof problem as channel resolvability \cite{Han_Verdu93}. Ahlswede also proved the strong converse of the identification via channel by essentially providing an alternative proof of channel resolvability \cite{Ahlswede06}.
In addition to the converse proof technique for the identification, channel resolvability also plays an important role in the context of the wiretap channel \cite{Hayashi06, Bloch_Laneman_13,Csiszár96}.

The problem of identification via channel and channel resolvability have been studied extensively \cite{Bracher_Lapidoth17, Boche_Schaefer20, Burnashev00, Hayashi06, Oohama13, Steinberg98, Steinberg_Merhav01, Yamamoto_Ueda15,Torres-Figueroa_Ferrara_Deppe_Boche23,Rosenberger_Pereg_Deppe23_compound, Wiese_Labidi_Deppe_Boche23, Rosenberger_Deppe_Pereg23_quantum,Kuzuoka18,Yagi_Han17}.
For point-to-point channel, the identification capacity is well understood. 
However, for multi-user networks, many identification problems are unsolved; for instance, the identification capacity region of the broadcast channel (with maximal error criterion) is unknown; see \cite[Section 2.4]{Bracher16}.
Thus we need to deepen our understanding of identification; see also a recent survey \cite{Brüche_Mross_Zhao_Labidi_Deppe_Jorswieck24}. 

%

As a first step toward that direction, we prove channel resolvability with non-random coding. As far as we know, there is no paper about channel resolvability with non-random coding, although various analysis techniques have been developed \cite{Han_Verdu93,Ahlswede06,Hayashi06,Oohama13,Cuff13,Ahlswede_Winter02,Li_Anantharam21,Yagli_Cuff19, Bastani_Telatar_Merhav17, Yu_Tan19,Liu_Cuff_Verdu17}. As we mentioned above, channel coding has two proof approaches; random coding and non-random coding. We believe that non-random coding approach to channel resolvability will deepen our understanding of  identification via channel.    

In this paper, we use the multiplicative weight update (MWU) algorithm to prove channel resolvability. The MWU algorithm is a widely used algorithm in game theory and learning theory, and it finds a certain minimax value of two-player game \cite{Arora_Hazan_Kale12}. By interpreting channel resolvability as two-player game and introducing a cost judiciously, we prove channel resolvability using non-random coding. 
The key idea is to use the MWU algorithm instead of the Chernoff bound used in Alswede's proof \cite{Ahlswede06}. Our proof is inspired by Kale's result, which proved hypergraph  covering problem using the MWU algorithm \cite{Kale07}. However, in contrast to hypergraph covering problem, channel resolvability is "soft covering" problem \cite{Cuff13}, and there are technical difficulties that are unique to channel resolvability. 

\textit{Notation:} Throughout the paper, random variables (e.g. $X$) and their realizations (e.g. $x$) are denoted by capital and lower letters, respectively. Finite alphabet sets (e.g. ${\cal X}$)  are denoted by calligraphic letters. For a finite set ${\cal X}$, the complement and the cardinality are denoted by ${\cal X}^c$, $|{\cal X}|$ respectively. The set of all distributions on ${\cal X}$ is denoted by ${\cal P}({\cal X})$.

\section{Problem Formulation of channel resolvability}


Consider an input distribution $P\in {\cal P(X)},$ and a channel $W$ from input alphabet ${\cal X}$ to output alphabet ${\cal Y}$. The goal is to simulate the output distribution
$$W_{P} := \sum_{{x}\in{\cal X}}P({x}) W_{x}.$$ 
Let ${\cal C} = \{{x}_1, {x}_2, ...,x_{L}\}$ be a codebook of size $L$ and $W_{{\cal C}}$ be the output distribution where inputs are codewords which are converted from uniform random numbers with the codebook ${\cal C}$, i.e.,
$$W_{{\cal C}}:=\frac{1}{L}\sum_{l=1}^L W_{x_l}.$$
A simulation error is measured by the total variation distance
between the output distribution $W_{P}$ and 
the output distribution $W_{{\cal C}}$ of the codebook:
$$d_\mathrm{var} (W_{P}, W_{{\cal C}}) := \frac{1}{2}\sum_{y\in{\cal Y}}|W_{P}(y) -W_{{\cal C}}(y)|.$$
The problem "channel resolvability" is how to construct a codebook ${\cal C}$ such that for arbitrary small $\eta$, the total variation distance $d_\mathrm{var}(W_{P}, W_{{\cal C}})$ is smaller than $\eta,$ as small size $L$ as possible.  


For the asymptotic setting of discrete memoryless channel and the worst input distribution, the optimal rate for channel resolvability is given by the Shannon capacity $C(W)$ \cite{Han_Verdu93}. Our goal of this paper is to achieve this result using non-random coding.


\section{Multiplicative weight update algorithm}

In this section, we review the MWU algorithm in notations that are compatible with
channel resolvability.
We consider an $L$ rounds game between ${\cal X}$-player and ${\cal Y}$-player;
${\cal X}$-player seeks to maximize the cost while ${\cal Y}$-player seeks to minimize
the cost. For ${\cal X}$-player's strategy $x\in {\cal X}$ and ${\cal Y}$-player's strategy
$y\in {\cal Y}$, cost $0 \le m(x,y) \le 1$ is incurred. 
In each round $1 \le l \le L$, ${\cal Y}$-player select a (mixed) strategy, i.e., 
a distribution $f(l,\cdot)$ on ${\cal Y}$; then ${\cal X}$-player responds with a 
strategy $x_l \in {\cal X}$. Our goal is to construct a sequence of mixed strategies
$f(1,\cdot), \ldots, f(L,\cdot)$ such that the total cost
\begin{align*}
\sum_{l=1}^L \bigg(\sum_{y^\prime \in {\cal Y}} f(l,y^\prime) m(x_l,y^\prime) \bigg)
\end{align*}
is not too much more than the cost of the ${\cal Y}$-player's best strategy in hindsight, i.e.,
\begin{align*}
\min_{y\in {\cal Y}} \sum_{l=1}^L m(x_l,y).
\end{align*}
The MWU algorithm is an iterative algorithm that obtains approximately the best strategies. The MWU algorithm has a few variations, and we use the so-called Hedge algorithm \cite{Arora_Hazan_Kale12}. 
The MWU algorithm reviewed in Algorithm \ref{algo:label} is a classical version of the one presented in \cite{Kale07}, and its performance is evaluated in the following lemma. For readers' convenience, we provide a proof of
the lemma in Appendix \ref{Proof_Lemma2}
; see also  \cite[Theorem 2.3]{Arora_Hazan_Kale12}.

\begin{algorithm}[t]
 \caption{Multiplicative weight update algorithm}
 \label{algo:label}
 \begin{algorithmic}
    \STATE Fix $\varepsilon \in(0, \frac{1}{2}],$ and initialize weights $w(1,y)=1$ for $y\in {\cal Y}$.
    \\For each step $l=1,2,..., L$:
        \\1. Calculate ${\cal Y}$-player's mixed strategy as follows:
         $$f(l,y) = \frac{w(l,y)}{\sum_{y^\prime\in{\cal Y}}w(l,y^\prime)},~y \in {\cal Y}.$$\\
        2. ${\cal X}$-player selects a pure strategy $x_l \in {\cal X}$.\\
        3. For the cost $(m(x_l,y):y\in {\cal Y})$, weights are updated via:
        $$w(l+1, y) = \exp\bigg(-\varepsilon \sum_{l' = 1}^{l} m(x_{l'},y)\bigg),~y\in{\cal Y}$$
 \end{algorithmic}
\end{algorithm}

\begin{lemma}(MWU algorithm performance)  
\label{lem:MWU_Algorithm}
Fix an arbitrary $\varepsilon\in (0, \frac{1}{2}]$.
The mixed strategies $f(1,\cdot),\ldots,f(L,\cdot)$ of ${\cal Y}$-player's strategy in Algorithm \ref{algo:label} satisfies for every $y\in {\cal Y}$,
\begin{align*}
        (1-\varepsilon) \sum_{l=1}^L \bigg(\sum_{y^\prime\in{\cal Y}} f(l,y^\prime)m(x_l,y^\prime)\bigg) 
        \le 
        \sum_{l=1}^Lm(x_l,y) + \frac{\ln{|{\cal Y}|}}{\varepsilon}.
\end{align*}
\end{lemma}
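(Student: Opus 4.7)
The plan is to run a standard potential-function argument on the total weight $\Phi(l) := \sum_{y \in \mathcal{Y}} w(l,y)$, so that $\Phi(1) = |\mathcal{Y}|$, and to sandwich $\Phi(L+1)$ between an upper bound (from the dynamics) and a lower bound (from any single coordinate $y$).

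For the upper bound, I would use the fact that for $m \in [0,1]$ the function $m \mapsto e^{-\varepsilon m}$ is convex, so $e^{-\varepsilon m} \le 1 - (1-e^{-\varepsilon}) m$ on $[0,1]$. Substituting into the update rule $w(l+1,y) = w(l,y)\, e^{-\varepsilon m(x_l,y)}$ and factoring out $\Phi(l)$ via $f(l,y) = w(l,y)/\Phi(l)$ gives
\begin{align*}
\Phi(l+1) \le \Phi(l)\Bigl[1 - (1-e^{-\varepsilon}) \sum_{y} f(l,y)\, m(x_l,y)\Bigr].
\end{align*}
Applying $1 - z \le e^{-z}$ and telescoping across $l=1,\dots,L$ yields
\begin{align*}
\Phi(L+1) \le |\mathcal{Y}| \exp\Bigl(-(1-e^{-\varepsilon}) \sum_{l=1}^{L} \sum_{y'} f(l,y')\, m(x_l,y')\Bigr).
\end{align*}

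For the lower bound, I simply note that for every fixed $y \in \mathcal{Y}$,
\begin{align*}
\Phi(L+1) \ge w(L+1,y) = \exp\Bigl(-\varepsilon \sum_{l=1}^L m(x_l,y)\Bigr).
\end{align*}
Combining the two bounds and taking logarithms produces
\begin{align*}
(1-e^{-\varepsilon}) \sum_l \sum_{y'} f(l,y')\, m(x_l,y') \le \varepsilon \sum_l m(x_l,y) + \ln|\mathcal{Y}|.
\end{align*}

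The last step is to replace $1-e^{-\varepsilon}$ by the cleaner quantity $\varepsilon(1-\varepsilon)$ that appears in the lemma. I would verify the elementary inequality $1 - e^{-\varepsilon} \ge \varepsilon(1-\varepsilon)$ for $\varepsilon \in [0,1]$ by noting that the difference $h(\varepsilon) := (1-e^{-\varepsilon}) - \varepsilon(1-\varepsilon)$ satisfies $h(0)=h'(0)=0$ and $h''(\varepsilon) = 2 - e^{-\varepsilon} > 0$, hence $h \ge 0$. Dividing the preceding display by $\varepsilon$ and using $(1-e^{-\varepsilon})/\varepsilon \ge 1-\varepsilon$ on the left-hand side delivers the stated bound. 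I do not expect a genuine obstacle here: the only design choice is which pointwise majorant of $e^{-\varepsilon m}$ to use so that the resulting prefactor matches $(1-\varepsilon)$ and the additive term matches $\ln|\mathcal{Y}|/\varepsilon$; the convex-combination bound above is the cleanest such choice for $\varepsilon \in (0,\tfrac{1}{2}]$.
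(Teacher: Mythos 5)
Your proposal is correct and follows essentially the same potential-function (Hedge) argument as the paper: the same upper bound via $e^{-\varepsilon m}\le 1-(1-e^{-\varepsilon})m$ and $1-z\le e^{-z}$, the same single-coordinate lower bound, and the same final step $1-e^{-\varepsilon}\ge \varepsilon(1-\varepsilon)$. The only cosmetic difference is that you justify the pointwise majorant by convexity (chord argument) where the paper uses a derivative computation, which changes nothing of substance.
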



\section{Single-shot bound}

In this section, we derive a single-shot bound for channel resolvability using the MWU algorithm. 
For the later convenience, we consider truncated measures known as smoothing, to apply the single-shot bound for the asymptotic setting; e.g. see \cite{Watanabe21}. 
\begin{definition}(Truncated measures)
    \label{def:channel}
    For an arbitrary set ${\cal S} \subseteq {\cal X \times Y}$, and a given channel $W$,  
    we define truncated measure as
    $$W_x^{\cal S}(y) = W_x(y) \mathbf{1}[(x,y)\in {\cal S}],$$
    and truncated measure with input distribution $P$ as
    $$W_P^{\cal S}(y) = \sum_{x\in{\cal X}}P(x)W_x(y) \mathbf{1}[(x,y)\in {\cal S}].$$
    We define $W_{\cal C}^{\cal S}$ similarly. 
    For the complement ${\cal S}^c$, we define $W_x^{{\cal S}^c}, W_P^{{\cal S}^c}, W_{\cal C}^{{\cal S}^c}$ similarly.
\end{definition}
We also consider the support set 
$$\mathsf{s}(W_P^{\cal S}) := \{y : W_P^{\cal S}(y)>0\},$$
to define a cost properly.
The following choice of cost enables us to apply the MWU algorithm for channel resolvability.
\begin{definition}(Cost)
    \label{def:cost}
    For an input distribution $P$, a given channel $W$, and an arbitrary set ${\cal S}$, 
    we define cost as 
    $$m(x,y) = \begin{dcases} \frac{W_x^{\cal S}(y)}{W_P^{\cal S}(y)}e^{-D_\mathrm{max}} & y \in \mathsf{s}(W_P^{\cal S})
        \\  \mathrm{undefined} & y\notin \mathsf{s}(W_P^{\cal S}) 
    \end{dcases},$$
    where 
    $$D_\mathrm{max} = \max_{(x,y)\in {\cal S} : \atop y \in \mathsf{s}(W_P^{\cal S})}\ln{\frac{W_x^{\cal S}(y)}{W_P^{\cal S}(y)}}.$$
\end{definition}
$D_\mathrm{max}$ ensures that the cost $m(x,y)$ is smaller than or equals to $1$ for all strategy $(x,y)\in {\cal X}\times \mathsf{s}(W_P^{\cal S})$. We run Algorithm \ref{algo:label} for $\mathsf{s}(W_P^{\cal S})$ instead of ${\cal Y}$; then, ${\cal Y}$-player's mixed strategy $f(l,\cdot)$ is a distribution on $\mathsf{s}(W_P^{\cal S})$.
Finally, we define ${\cal X}$-player's strategy at each round $l$. 
\begin{definition}(${\cal X}$-player's strategy) 
    \label{def:X_strategy}
    At each round $1\leq l\leq L$, ${\cal Y}$-player selects a mixed strategy $f(l,y)$ for $y\in\mathsf{s}(W_P^{\cal S})$. For cost $m(x,y)$, ${\cal X}$-player selects $x_l\in{\cal X}$ such that
    $$x_l = \underset{x}{\mathrm{argmax}} \sum_{y\in\mathsf{s}(W_P^{\cal S})} f(l,y)m(x,y). $$
\end{definition}

Note that such ${\cal X}$-player's strategy $x_l$ satisfies Lemma \ref{lem:X_strategy}.
\begin{lemma}(${\cal X}$-player's strategy performance)
    \label{lem:X_strategy}
    At each round $l$, after ${\cal Y}$-player's mixed strategy $f(l,y)$ for $y\in \mathsf{s}(W_P^{\cal S})$, ${\cal X}$-player's strategy $x_l$ satisfies
    \begin{equation}
        \label{eq:sec4:X_strategy}
        \sum_{y\in \mathsf{s}(W_P^{\cal S})} f(l,y) m(x_l, y) \geq e^{-D_\mathrm{max}}.
    \end{equation}
\end{lemma}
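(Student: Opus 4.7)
The plan is to exploit the fact that a maximum is at least any weighted average. Specifically, since $x_l$ is chosen to maximize $\sum_{y} f(l,y) m(x,y)$ over $x \in \mathcal{X}$, we have
\begin{equation*}
\sum_{y \in \mathsf{s}(W_P^{\mathcal{S}})} f(l,y) m(x_l,y) \;\geq\; \sum_{x \in \mathcal{X}} P(x) \sum_{y \in \mathsf{s}(W_P^{\mathcal{S}})} f(l,y) m(x,y)
\end{equation*}
for the particular averaging distribution $P$ appearing in the problem.

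Next, I would swap the order of summation on the right-hand side and substitute the definition of the cost from Definition \ref{def:cost}, giving
\begin{equation*}
\sum_{y \in \mathsf{s}(W_P^{\mathcal{S}})} f(l,y) \, e^{-D_{\max}} \sum_{x \in \mathcal{X}} P(x) \frac{W_x^{\mathcal{S}}(y)}{W_P^{\mathcal{S}}(y)}.
\end{equation*}
The key identity is that, by definition of the truncated measure, $\sum_{x \in \mathcal{X}} P(x) W_x^{\mathcal{S}}(y) = W_P^{\mathcal{S}}(y)$, so the inner sum collapses to $1$ for every $y \in \mathsf{s}(W_P^{\mathcal{S}})$ (where the ratio is well defined). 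What remains is $e^{-D_{\max}} \sum_{y} f(l,y) = e^{-D_{\max}}$, since $f(l,\cdot)$ is a probability distribution on $\mathsf{s}(W_P^{\mathcal{S}})$.

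There is essentially no obstacle here; the only subtlety is making sure that the cost is well defined on the support where we sum, which is guaranteed by restricting the $\mathcal{Y}$-player's strategy space to $\mathsf{s}(W_P^{\mathcal{S}})$, so the denominator $W_P^{\mathcal{S}}(y)$ never vanishes. The argument therefore amounts to the chain: \emph{max $\geq$ $P$-average}, followed by the mixture identity $\sum_x P(x) W_x^{\mathcal{S}} = W_P^{\mathcal{S}}$, followed by normalization of $f(l,\cdot)$.
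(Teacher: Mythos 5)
Your proposal is correct and follows essentially the same route as the paper's proof: bound the maximum over $x$ by the $P$-average, collapse $\sum_x P(x) W_x^{\cal S}(y)$ to $W_P^{\cal S}(y)$, and use that $f(l,\cdot)$ is a distribution on $\mathsf{s}(W_P^{\cal S})$, with the factor $e^{-D_\mathrm{max}}$ handled trivially. No gaps.
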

\begin{proof}
We get 
\begin{equation*}
    \begin{split}
        \max_x \sum_{y\in \mathsf{s}(W_P^{\cal S})}f(l,y) \frac{W_x^{\cal S}(y)}{W_P^{\cal S}(y)} &\geq \mathbb{E}_{X \sim P}\bigg[\sum_{y\in\mathsf{s}(W_P^{\cal S})}f(l,y) \frac{W_X^{\cal S}(y)}{W_P^{\cal S}(y)}\bigg]
        \\&= \sum_{y\in\mathsf{s}(W_P^{\cal S})}f(l,y) \frac{W_P^{\cal S}(y)}{W_P^{\cal S}(y)}
        \\&= \sum_{y\in \mathsf{s}(W_P^{\cal S})}f(l,y) \\&= 1.
    \end{split}
\end{equation*}
Multiplying $e^{-D_{\mathrm{max}}}$ to both sides, we get equation \eqref{eq:sec4:X_strategy} for $y\in \mathsf{s}(W_P^{\cal S})$. 
\end{proof}

For our choice of cost and ${\cal X}$-player's strategy, we get the following performance of channel resolvability code ${\cal C}=\{x_1, x_2, \ldots,x_L\}$.

\begin{lemma}(Channel resolvability performance)
\label{lem:channelResl_performance}
For an input distribution $P$, a given channel $W$, an arbitrary set ${\cal S}$,  and arbitrary small $\varepsilon\in(0,\frac{1}{2}]$, the codebook ${\cal C}$ obtained by the MWU algorithm satisfies
\begin{equation}\label{eq:sec4:channel_Resolvperfor}
\forall y\in {\cal Y} \qquad W_P^{\cal S}(y) -W_{\cal C}^{\cal S}(y)\leq 2\varepsilon W_P^{\cal S}(y),  \end{equation}
as long as the size $L$ satisfies that 
\begin{equation}\label{eq:sec4:assumption}
    L \geq \frac{e^{D_\mathrm{max}}\ln{|{\cal Y}|}}{\varepsilon^2}.
\end{equation}
\end{lemma}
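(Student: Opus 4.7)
The plan is to combine Lemma~\ref{lem:MWU_Algorithm} and Lemma~\ref{lem:X_strategy} by plugging the specific cost function from Definition~\ref{def:cost} into both, and then rearranging to isolate the desired pointwise ratio between $W_{\cal C}^{\cal S}(y)$ and $W_P^{\cal S}(y)$. The first step is to note that, since we run Algorithm~\ref{algo:label} on the support $\mathsf{s}(W_P^{\cal S})$ rather than on all of ${\cal Y}$, Lemma~\ref{lem:MWU_Algorithm} yields, for every $y\in\mathsf{s}(W_P^{\cal S})$,
\begin{equation*}
(1-\varepsilon)\sum_{l=1}^L\sum_{y'\in\mathsf{s}(W_P^{\cal S})}f(l,y')m(x_l,y')\;\le\;\sum_{l=1}^L m(x_l,y)+\frac{\ln|{\cal Y}|}{\varepsilon},
\end{equation*}
where we upper-bounded $\ln|\mathsf{s}(W_P^{\cal S})|$ by $\ln|{\cal Y}|$.

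Next, I would bound each side using the structure of the cost. By Lemma~\ref{lem:X_strategy}, each inner sum on the left is at least $e^{-D_\mathrm{max}}$, so the left side is at least $(1-\varepsilon)Le^{-D_\mathrm{max}}$. On the right, substituting the definition of $m(x,y)$ and recalling $W_{\cal C}^{\cal S}(y)=\frac{1}{L}\sum_l W_{x_l}^{\cal S}(y)$ gives
\begin{equation*}
\sum_{l=1}^L m(x_l,y)=\frac{e^{-D_\mathrm{max}}}{W_P^{\cal S}(y)}\sum_{l=1}^L W_{x_l}^{\cal S}(y)=\frac{L\,e^{-D_\mathrm{max}}\,W_{\cal C}^{\cal S}(y)}{W_P^{\cal S}(y)}.
\end{equation*}
Dividing the resulting inequality through by $Le^{-D_\mathrm{max}}$ yields
\begin{equation*}
1-\varepsilon\;\le\;\frac{W_{\cal C}^{\cal S}(y)}{W_P^{\cal S}(y)}+\frac{e^{D_\mathrm{max}}\ln|{\cal Y}|}{\varepsilon L}.
\end{equation*}

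The hypothesis \eqref{eq:sec4:assumption} makes the last term at most $\varepsilon$, so $1-2\varepsilon\le W_{\cal C}^{\cal S}(y)/W_P^{\cal S}(y)$, which is exactly \eqref{eq:sec4:channel_Resolvperfor} for $y\in\mathsf{s}(W_P^{\cal S})$ after multiplying through by $W_P^{\cal S}(y)$. For $y\notin\mathsf{s}(W_P^{\cal S})$ the inequality is automatic: the right-hand side vanishes, while the left-hand side $-W_{\cal C}^{\cal S}(y)$ is nonpositive.

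I do not expect any real obstacle: the creative work has already been done in choosing the cost so that (i) the normalizing factor $e^{-D_\mathrm{max}}$ makes $m(x,y)\le 1$, enabling Lemma~\ref{lem:MWU_Algorithm}, and (ii) averaging $m(x,y)$ over $P$ trivially gives $1$, enabling the clean lower bound in Lemma~\ref{lem:X_strategy}. The only mildly delicate point is bookkeeping with the support set $\mathsf{s}(W_P^{\cal S})$, since the cost is undefined off it; handling the off-support $y$'s separately as above resolves this cleanly.
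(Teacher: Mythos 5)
Your proof is correct and follows essentially the same route as the paper: apply Lemma~\ref{lem:MWU_Algorithm} over the support $\mathsf{s}(W_P^{\cal S})$, lower-bound the left side via Lemma~\ref{lem:X_strategy}, rewrite $\sum_l m(x_l,y)$ as $L e^{-D_\mathrm{max}} W_{\cal C}^{\cal S}(y)/W_P^{\cal S}(y)$, invoke the size condition, and treat $y\notin\mathsf{s}(W_P^{\cal S})$ trivially. The only cosmetic difference is that you bound $\ln|\mathsf{s}(W_P^{\cal S})|$ by $\ln|{\cal Y}|$ at the start rather than at the end, which changes nothing.
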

\begin{proof}
    First, we derive the equation 
    \begin{equation} \label{eq:sec4:2}
        \forall y \in {\cal Y} \qquad  W_{{\cal C}}^{\cal S} (y) \geq \bigg(1-\varepsilon - \frac{e^{D_\mathrm{max}} \ln{|{\cal Y}|}}{L\varepsilon}\bigg)W_P^{\cal S}(y),
    \end{equation}
    with two cases,  $y\notin \mathsf{s}(W_P^{\cal S})$ and $y\in\mathsf{s}(W_P^{\cal S})$.

    Case $y \notin \mathsf{s}(W_P^{\cal S})$:
    For all $y \notin \mathsf{s}(W_P^{\cal S})$, $W_P^{\cal S}(y)$ is 0 and $W_{\cal C}^{\cal S}(y)$ is not negative. Thus, equation \eqref{eq:sec4:2} holds.
    
    Case $y\in \mathsf{s}(W_P^{\cal S})$: 
    Applying Lemma \ref{lem:MWU_Algorithm} for 
    Definition \ref{def:X_strategy}, Lemma \ref{lem:X_strategy} implies that
    \begin{equation*}
        \forall y \in \mathsf{s}(W_P^{\cal S}) \qquad \sum_{l=1}^Lm(x_l,y)\geq (1-\varepsilon)L e^{-D_\mathrm{max}} - \frac{\ln{|\mathsf{s}(W_P^{\cal S})|}}{\varepsilon}.
    \end{equation*}
    From Definition \ref{def:cost}, for $y\in\mathsf{s}(W_P^{\cal S})$, the left side of equation is
    $$ \sum_{l=1}^Lm(x_l,y) = \sum_{l=1}^L \frac{W_{x_l}^{\cal S}(y)}{W_P^{\cal S}(y)}e^{-D_\mathrm{max}} = L\frac{W_{\cal C}^{\cal S}(y)}{W_P^{\cal S}(y)}e^{-D_\mathrm{max}} .$$
    Then multiplying $\frac{W_P^{\cal S}(y)e^{D_\mathrm{max}}}{L}$ to both sides, we get
    $$\forall y \in \mathsf{s}(W_P^{\cal S}) \qquad  W_{{\cal C}}^{\cal S} (y) \geq \bigg(1-\varepsilon - \frac{e^{D_\mathrm{max}} \ln{|\mathsf{s}(W_P^{\cal S})|}}{L\varepsilon}\bigg)W_P^{\cal S}(y).$$
    Because support set $\mathsf{s}(W_P^{\cal S})$ is included in set ${\cal Y}$, we get $|\mathsf{s}(W_P^{\cal S})| \leq |{\cal Y}|.$
    Thus, equation \eqref{eq:sec4:2} holds.
    
    Adjusting the size $L$ in equation \eqref{eq:sec4:2}, we get
    \begin{equation}\label{eq:sec4:3}
        \forall y \in {\cal Y}  \qquad W_{\cal C}^{\cal S}(y)\geq (1-2\varepsilon ) W_P^{\cal S}(y).
    \end{equation}
    as long as the size $L$ satisfies assumption \eqref{eq:sec4:assumption}.
    After rearranging equation \eqref{eq:sec4:3}, we get equation \eqref{eq:sec4:channel_Resolvperfor}
\end{proof}

From Lemma \ref{lem:channelResl_performance}, we derive a single-shot bound for channel resolvability using the MWU algorithm.

\begin{theorem}(Single-shot bound for channel resolvability) 
    \label{the:Single-shot}
    For an input distribution $P$, a given channel $W$, an arbitrary set ${\cal S}$, and arbitrary small $\varepsilon\in(0,\frac{1}{2}]$, the codebook ${\cal C}$ obtained by the MWU algorithm satisfies
    $$d_{\mathrm{var}}(W_P, W_{\cal C}) \leq 2\varepsilon + \sum_{(x,y)\in{\cal S}^c}P(x)W_x(y),$$
    as long as the size $L$ satisfies that 
    $$L \geq  \frac{e^{D_\mathrm{max}} \ln{|{\cal Y}|} }{\varepsilon^2}.$$
\end{theorem}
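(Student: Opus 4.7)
The plan is to reduce the theorem to Lemma \ref{lem:channelResl_performance} by bounding the total variation distance between the full distributions $W_P$ and $W_{\cal C}$ in terms of (a) the one-sided pointwise gap between the truncated measures $W_P^{\cal S}$ and $W_{\cal C}^{\cal S}$, which is controlled by \eqref{eq:sec4:channel_Resolvperfor}, and (b) the mass assigned to ${\cal S}^c$, which appears as the second term in the claimed bound.

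First I would rewrite the total variation distance in its one-sided form
\begin{equation*}
d_{\mathrm{var}}(W_P, W_{\cal C}) = \sum_{y\in{\cal Y}}\bigl(W_P(y)-W_{\cal C}(y)\bigr)^+,
\end{equation*}
valid because $W_P$ and $W_{\cal C}$ are both probability distributions on ${\cal Y}$. By Definition \ref{def:channel}, $W_P(y) = W_P^{\cal S}(y) + W_P^{{\cal S}^c}(y)$ and similarly for $W_{\cal C}$, so
\begin{equation*}
W_P(y)-W_{\cal C}(y) = \bigl(W_P^{\cal S}(y)-W_{\cal C}^{\cal S}(y)\bigr) + \bigl(W_P^{{\cal S}^c}(y)-W_{\cal C}^{{\cal S}^c}(y)\bigr).
\end{equation*}
Applying $(a+b)^+\le a^++b^+$ and dropping the nonnegative $W_{\cal C}^{{\cal S}^c}(y)$ inside a positive part yields the pointwise inequality
\begin{equation*}
\bigl(W_P(y)-W_{\cal C}(y)\bigr)^+ \;\le\; \bigl(W_P^{\cal S}(y)-W_{\cal C}^{\cal S}(y)\bigr)^+ + W_P^{{\cal S}^c}(y).
\end{equation*}

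Next I would invoke Lemma \ref{lem:channelResl_performance}, which under the hypothesis $L\ge e^{D_\mathrm{max}}\ln|{\cal Y}|/\varepsilon^2$ guarantees $W_P^{\cal S}(y)-W_{\cal C}^{\cal S}(y)\le 2\varepsilon W_P^{\cal S}(y)$ for every $y$, so the positive part of the first summand is at most $2\varepsilon W_P^{\cal S}(y)$. Summing over $y$, the first contribution is bounded by $2\varepsilon \sum_y W_P^{\cal S}(y)\le 2\varepsilon$ because $W_P^{\cal S}$ is a sub-probability measure, and the second contribution equals
\begin{equation*}
\sum_{y\in{\cal Y}} W_P^{{\cal S}^c}(y) = \sum_{(x,y)\in{\cal S}^c} P(x)W_x(y)
\end{equation*}
by the definition of the truncated measure. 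Combining gives the claimed single-shot bound.

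There is no real obstacle: the only nontrivial move is the one-sided reformulation of $d_{\mathrm{var}}$ together with $(a+b)^+\le a^++b^+$, which lets the one-sided guarantee of Lemma \ref{lem:channelResl_performance} be used directly without needing a lower bound on $W_P^{\cal S}(y)-W_{\cal C}^{\cal S}(y)$; everything else is bookkeeping.
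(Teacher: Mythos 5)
Your proposal is correct and follows essentially the same route as the paper: the one-sided positive-part form of $d_{\mathrm{var}}$ that you use is just the paper's restriction of the sum to $\{y: W_P(y)\geq W_{\cal C}(y)\}$, after which both arguments split $W_P$ and $W_{\cal C}$ into their ${\cal S}$ and ${\cal S}^c$ truncations, invoke Lemma \ref{lem:channelResl_performance} to bound the ${\cal S}$ part by $2\varepsilon$, and drop the nonnegative $W_{\cal C}^{{\cal S}^c}$ term to leave $\sum_{(x,y)\in{\cal S}^c}P(x)W_x(y)$. The only cosmetic difference is your pointwise use of $(a+b)^+\le a^+ + b^+$ in place of the paper's set-restricted summation, which changes nothing of substance.
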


\begin{proof} 
    We measure the total variation distance and get
    \begin{align}
    \label{eq:sec4:varDist}
        d_\mathrm{var}(W_P,W_{\cal C}) 
        &= \frac{1}{2}\sum_{y\in{\cal Y}}|W_P(y)-W_{\cal C}(y)| \notag \\
        &= \sum_{y\in {\cal Y} : \atop W_P(y)\geq W_{\cal C}(y)}\Big(W_P(y) - W_{\cal C}(y) \Big) \notag\\
        & = \sum_{y\in {\cal Y} : \atop W_P(y)\geq W_{\cal C}(y)} \Big(W_P^{\cal S}(y) + W_P^{{\cal S}^c}(y)\Big)  \notag\\
        & \qquad - \sum_{y\in {\cal Y} : \atop W_P(y)\geq W_{\cal C}(y)} \Big(W_{\cal C}^{\cal S}(y) + W_{\cal C}^{{\cal S}^c}(y)\Big) \notag\\
         & = \sum_{y\in {\cal Y} : \atop W_P(y)\geq W_{\cal C}(y)} \Big(W_P^{\cal S}(y) - W_{\cal C}^{{\cal S}}(y)\Big)  \notag \\ 
         &\qquad  + \sum_{y\in {\cal Y} : \atop W_P(y)\geq W_{\cal C}(y)} \Big(W_P^{{\cal S}^c}(y) - W_{\cal C}^{{{\cal S}}^c}(y)\Big).
    \end{align}
    Lemma \ref{lem:channelResl_performance} implies that 
    \begin{align}\label{eq:sec4:chaResol_performance}
        \sum_{y\in {\cal Y} : \atop W_P(y)\geq W_{\cal C}(y)} \Big(W_P^{\cal S}(y) - W_{\cal C}^{{\cal S}}(y)\Big)  &\leq 2\varepsilon \sum_{y\in {\cal Y} : \atop W_P(y)\geq W_{\cal C}(y)}W_P^{\cal S}(y) \notag
        \\& \leq 2\varepsilon.
    \end{align}
    By combining equation \eqref{eq:sec4:varDist} and \eqref{eq:sec4:chaResol_performance}, we get 
   \begin{align*}
        d_\mathrm{var}(W_P,W_{\cal C}) & \leq 2\varepsilon + \sum_{y\in {\cal Y} : \atop W_P(y)\geq W_{\cal C}(y)} \Big(W_P^{{\cal S}^c}(y) - W_{\cal C}^{{{\cal S}}^c}(y)\Big) 
        \\&\leq 2\varepsilon + \sum_{y\in {\cal Y} : \atop W_P(y)\geq W_{\cal C}(y)} W_P^{{\cal S}^c}(y)
        \\&\leq  2\varepsilon + \sum_{y\in{\cal Y}} W_P^{{\cal S}^c}(y).
    \end{align*}
    Note that
    $\sum_{y\in{\cal Y}}W_P^{{\cal S}^c}(y) = \sum_{(x,y)\in{\cal S}^c}P(x)W_x(y).$
    Thus, we get Theorem \ref{the:Single-shot} as long as the size $L$ satisfies assumption \eqref{eq:sec4:assumption}.
\end{proof}



\section{Asymptotic result for fixed type}
In this section, we consider channel resolvability for block length $n$ 
and use type method; see \cite[Section 2]{Csiszar-Korner11}. 
We fix a type $T$ on ${\cal X}$. Let ${\cal X}_T^n$ be a set of all sequences $x^n$ in ${\cal X}^n$ that have type $T$.
We consider an input distribution $P_n^T$ on ${\cal X}_T^n$. 
We simulate the output distribution $W_{P_n^T}$ for an input distribution $P_n^T$ using the MWU algorithm. We define the set ${\cal A}$ on the strong conditional typical set of the space ${\cal (X}_T^n {\cal ,Y}^n)$. 
\begin{definition}(Strong conditional typical set)
\label{def:condi_typical}
    For block length $n$, a type $T$, a given i.i.d channel $W$, and small error threshold $\alpha$, we define the strong conditional typical set 
    \begin{equation*}
        \begin{split}
            {\cal T}_{W,\alpha}^n(x^n) 
            :=& \bigg\{y^n: \bigg|  \frac{1}{n}N(a,b|x^n,y^n)- T(a)W_a(b) \bigg| < \alpha \\&\qquad  \forall a\in{\cal X}, b\in{\cal Y}\bigg\},     
        \end{split}
    \end{equation*}
    where $N(a,b|x^n,y^n)$ outputs the number of occurrences of set $(a,b)$ in $(x^n,y^n)$.
    We also define set
    $${\cal A} := \{(x^n, y^n): x^n \in {\cal X}_T^n,\; y^n \in {\cal T}_{W,\alpha}^n(x^n)\}.$$
\end{definition}

Note that if $y^n$ is in the strong conditional typical set ${\cal T}_{W,\alpha}^n$,$\;$ $y^n$ satisfies the property of the weak conditional typical sequence such that 
\begin{align}\label{eq:sec5:weak_conditional}
     \bigg|  -\frac{1}{n}\ln W_{x^n}(y^n)- H(W|T) \bigg| < \alpha' 
     \quad \forall x^n\in{\cal X}_T^n, 
\end{align}
where $\alpha' = c_W \alpha$ for constant $c_W$  \cite[Problem 2.5]{Csiszar-Korner11}. We define the set ${\cal B}$ on the strong typical set of the output space ${\cal Y}^n$. 

\begin{definition}(Strong typical set)
\label{def:typical}For block length $n$, a type $T$, a given i.i.d channel $W$, and small error threshold $\beta = |{\cal X}|\alpha$, we define the strong typical set 
\begin{equation*}
    \begin{split}
         {\cal T}_{W_{T}, \beta}^n:=
          \bigg\{y^n: \bigg| \frac{1}{n}N(b|y^n)-W_T(b) \bigg|\leq \beta \quad \forall b\in {\cal Y} \bigg\}
    \end{split}
\end{equation*}
where $N(b|y^n)$ outputs the number of occurrences of $b$ in $y^n$.
For an input distribution $P_n^T$, we also define set 
    \begin{equation*}
        {\cal B}:= \bigg\{y^n \in {\cal T}_{W_{T}, \beta}^n: W_{P_n^T}^{\cal A}(y^n) \geq \frac{\tau}{|{\cal T}_{W_{T},\beta}^n|} \bigg\},
    \end{equation*}
    where $\tau$ is a small positive value and 
    $$W_{P_n^T}^{\cal A}(y^n) = \sum_{x^n\in{\cal X}^n}P_n^T(x^n)W_{x^n}(y^n)\mathbf{1}[(x^n, y^n)\in \mathcal{A}].$$
\end{definition}
Note that the number of strong typical sequences satisfies
\begin{equation}\label{eq:sec5:number_typical}
    \bigg|\frac{1}{n}\ln |{\cal T}_{W_T,\beta}^n| - H(W_T) \bigg|\leq \beta',
\end{equation}
 for arbitrarily small positive $\beta'$ and sufficiently large $n$ \cite[Lemma 2.13]{Csiszar-Korner11}.
Using these sets ${\cal A}$ and ${\cal B}$, we prove channel resolvability for a fixed type. 

\begin{theorem}(Channel resolvability for fixed type)
\label{the:chaResol_fixedType}
     For a fixed type $T$, an input distribution $P_n^T$ on ${\cal X}_T^n$, a given i.i.d. channel $W$, arbitrarily small  $\varepsilon \in(0,\frac{1}{2}]$, small positive $\tau, \delta, \alpha' ,\beta'>0$, and sufficiently large $n$, the codebook ${\cal C}_n^T$ obtained by the MWU algorithm satisfies
     \begin{equation*}\label{eq:sec5:theo_fixedType}
         d_\mathrm{var}(W_{P_n^T}, W_{{\cal C}_n^T})\leq 2\varepsilon + \tau + \delta,
     \end{equation*}
    as long as the size $L_n$ satisfies that 
    \begin{equation*}\label{eq:sec5:assumption}
        L_n \geq \frac{\exp{\Big(n(I(T,W) + \alpha'+\beta') -\ln{\tau}\Big)}\; \ln{|{\cal Y}|^n} }{\varepsilon^2},
    \end{equation*}
    where $I(T,W)$ is mutual information of type $T$ and channel $W$.
\end{theorem}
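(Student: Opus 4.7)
The plan is to invoke Theorem \ref{the:Single-shot} applied to the block-length-$n$ channel $W^n$ with the input distribution $P_n^T$, choosing the set
$$ {\cal S} \;=\; \{(x^n,y^n) : (x^n,y^n)\in{\cal A},\; y^n\in{\cal B}\}. $$
This way, ${\cal S}$ simultaneously controls both quantities on the right-hand side of Theorem \ref{the:Single-shot}: ${\cal A}$ lets us bound $D_\mathrm{max}$ via conditional typicality, while ${\cal B}$ ensures the denominator $W_{P_n^T}^{\cal S}(y^n)$ stays large enough on its support.

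First I would bound $D_\mathrm{max}$. For any $(x^n,y^n)\in{\cal S}$, the weak conditional typicality bound \eqref{eq:sec5:weak_conditional} gives $W_{x^n}(y^n)\le\exp(-n(H(W|T)-\alpha'))$. For the denominator, since $y^n\in{\cal B}$, the extra constraint in ${\cal S}$ is automatic, so $W_{P_n^T}^{\cal S}(y^n)=W_{P_n^T}^{\cal A}(y^n)\ge \tau/|{\cal T}_{W_T,\beta}^n|$, and combining with \eqref{eq:sec5:number_typical} yields $|{\cal T}_{W_T,\beta}^n|\le\exp(n(H(W_T)+\beta'))$. Putting these together,
$$ \frac{W_{x^n}^{\cal S}(y^n)}{W_{P_n^T}^{\cal S}(y^n)} \;\le\; \frac{\exp(n(H(W_T)+\beta'-H(W|T)+\alpha'))}{\tau} \;=\; \frac{\exp(n(I(T,W)+\alpha'+\beta'))}{\tau}, $$
so $e^{D_\mathrm{max}}\le \exp(n(I(T,W)+\alpha'+\beta')-\ln\tau)$, matching the denominator in the stated lower bound on $L_n$.

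Next I would bound the error term $\sum_{(x^n,y^n)\in{\cal S}^c}P_n^T(x^n)W_{x^n}(y^n)$ by splitting it into the mass on ${\cal A}^c$ and the mass on ${\cal A}\setminus({\cal X}_T^n\times{\cal B})$. For the second piece, the key observation is that ${\cal A}\subseteq{\cal X}_T^n\times{\cal T}_{W_T,\beta}^n$: if the joint type of $(x^n,y^n)$ is within $\alpha$ of $T\cdot W$ in each coordinate, then the marginal type of $y^n$ is within $|{\cal X}|\alpha=\beta$ of $W_T$. Hence $W_{P_n^T}^{\cal A}(y^n)=0$ for $y^n\notin{\cal T}_{W_T,\beta}^n$, and the sum collapses to $y^n\in{\cal T}_{W_T,\beta}^n$ with $W_{P_n^T}^{\cal A}(y^n)<\tau/|{\cal T}_{W_T,\beta}^n|$, which is trivially at most $\tau$. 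For the first piece, $\sum_{(x^n,y^n)\notin{\cal A}}P_n^T(x^n)W_{x^n}(y^n) = \Pr\big[Y^n\notin{\cal T}_{W,\alpha}^n(X^n)\big]$, which by the standard Chebyshev estimate for strong conditional typicality (\cite[Lemma 2.12]{Csiszar-Korner11}) tends to $0$, so it is $\le\delta$ for sufficiently large $n$.

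Combining these two pieces with Theorem \ref{the:Single-shot} gives $d_\mathrm{var}(W_{P_n^T},W_{{\cal C}_n^T})\le 2\varepsilon+\tau+\delta$ under the stated size condition on $L_n$. I do not foresee a substantive obstacle: the only delicate point is to recognize that enforcing $y^n\in{\cal B}$ inside ${\cal S}$ does not inflate the error term, and this is precisely what the inclusion ${\cal A}\subseteq{\cal X}_T^n\times{\cal T}_{W_T,\beta}^n$ provides through the compatible choice $\beta=|{\cal X}|\alpha$.
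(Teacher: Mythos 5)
Your proposal is correct and follows essentially the same route as the paper: the same choice ${\cal S}={\cal A}\cap({\cal X}_T^n\times{\cal B})$, the same bound $D_\mathrm{max}\le n(I(T,W)+\alpha'+\beta')-\ln\tau$ via conditional typicality and the cardinality bound on ${\cal T}_{W_T,\beta}^n$, and the same split of the residual mass into the non-typical part ($\le\delta$) and the low-probability part of ${\cal T}_{W_T,\beta}^n$ ($\le\tau$), using ${\cal A}\subseteq{\cal X}_T^n\times{\cal T}_{W_T,\beta}^n$ with $\beta=|{\cal X}|\alpha$ exactly as the paper does before applying Theorem \ref{the:Single-shot}.
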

\begin{proof}
    Fix sufficiently small $\alpha$ and let $\beta=|{\cal X}|\alpha$ so that equation \eqref{eq:sec5:weak_conditional} and \eqref{eq:sec5:number_typical} are satisfied for the given $\alpha'$ and $\beta'$.
    We define a set 
    $${\cal S} := {\cal A} \cap ({\cal X}_T^n \times  {\cal B}). $$
    To apply Theorem \ref{the:Single-shot} for the set ${\cal S}$, we have to calculate $D_\mathrm{max}$, and $\sum_{(x^n,y^n)\in{\cal S}^c} P_n^T(x^n) W_{x^n}(y^n) .$ First, we focus on the bound of $W_{x_n}^\mathcal{S}(y^n)$ and $W_{P_n^T}^\mathcal{S}(y^n)$. For $(x^n, y^n) \in \mathcal{S}$, equation \eqref{eq:sec5:weak_conditional} leads that 
    \begin{equation}
        \label{eq:sec5:bound_W_x^S}
        \frac{1}{n}\ln{W_{x^n}^{\cal S}(y^n)} \leq -H(W|T)+\alpha',
    \end{equation} 
    Note that $$W_{P_n^T}^\mathcal{S}(y^n) = W_{P_n^T}^\mathcal{A}(y^n),$$ for $y^n\in\mathsf{s}(W_{P_n^T}^\mathcal{S})\subset \mathcal{B}$.
    Using the property of the set ${\cal B}$ in equation \eqref{eq:sec5:number_typical}, for $y^n\in\mathsf{s}(W_{P_n^T}^\mathcal{S})$, we obtain 
    \begin{align}
        \label{eq:sec5:bound_W_P^S}
            \frac{1}{n}\ln{W_{P_n^T}^{\cal S}(y^n)} &= \ln{W_{P_n^T}^{\cal A}(y^n)} \notag
        \\&\geq \frac{1}{n}\ln{\frac{\tau}{|{\cal T}_{W_{T},\beta}^n|}} \notag
        \\&\geq -H(W_{T}) - \beta' +\frac{1}{n}\ln{\tau}. 
    \end{align}
    Thus, from Definition \ref{def:cost}, we get
    \begin{align}\label{eq:sec5:D_max}
            D_\mathrm{max}&=\max_{(x^n,y^n)\in {\cal S} : \atop y^n \in \mathsf{s}\big(W_{P_n^T}^{\cal S}\big)} 
            \ln{\frac{W_{x^n}^{{\cal S}}(y^n)}{W_{P_n^T}^{{\cal S}}(y^n)}} \notag
            \\&= \max_{(x^n,y^n)\in {\cal S} : \atop y^n \in \mathsf{s}\big(W_{P_n^T}^{\cal S}\big)}
            \Big(\ln{W_{x^n}^{{\cal S}}(y^n)} - \ln{W_{P_n^T}^{{\cal S}}(y^n)}\Big)\notag
            \\&\leq n(-H(W|T)+\alpha') + n(H(W_T)+\beta')-\ln{\tau} \notag
            \\& = n(I(T,W) +\alpha'+\beta')-\ln{\tau},
    \end{align}
    where the inequality follows from equation \eqref{eq:sec5:bound_W_x^S}
    and \eqref{eq:sec5:bound_W_P^S}.
 
    Next, for ${\cal S}^c := ({\cal X}_T^n \times {\cal Y}^n) \backslash {\cal S}$, we get
    \begin{align}\label{eq:sec5:smoothing1}
            &\sum_{(x^n,y^n)\in{\cal S}^c} P_n^T(x^n) W_{x^n}(y^n) \notag
            \\&\qquad  \leq  \sum_{(x^n,y^n)\in ({\cal X}_T^n \times {\cal Y}^n)\backslash {\cal A}} P_n^T(x^n) W_{x^n}(y^n) \notag
            \\&\qquad\quad +  \sum_{(x^n,y^n) \in {\cal A} \cap ({\cal X}_T^n \times ({\cal Y}^n\backslash {\cal B}) )} P_n^T(x^n) W_{x^n}(y^n),
    \end{align}
    where the inequality follows from ${\cal S}^c=\{ ({\cal X}_T^n \times {\cal Y}^n)\backslash {\cal A} \} \cup  \{{\cal A} \cap ({\cal X}_T^n \times ({\cal Y}^n\backslash {\cal B}) )\}$
    with the union bound. 
    Note that the probability of non-typical set is small \cite[Lemma 2.12]{Csiszar-Korner11}, i.e., 
    for sufficient large $n$, we get
    \begin{align}\label{eq:sec5:smoothing2}
        \sum_{(x^n, y^n) \in ({\cal X}_T^n \times {\cal Y}^n)\backslash {\cal A}} P_n^T(x^n) W_{x^n}(y^n) \leq \delta.
    \end{align}
    Note also that the set ${\cal A} \cap \big({\cal X}_T^n\times ( {\cal Y}^n\backslash{{\cal T}_{W_T,\beta}^n})\big)$ is empty set, because  each sequence $y^n$ in ${\cal A}$ is included in ${\cal T}_{W_T,\beta}^n$ for $\beta = |{\cal X}|\alpha$ \cite[Lemma 2.10]{Csiszar-Korner11}. We get  
    \begin{align}\label{eq:sec5:smoothing3}
            &\sum_{(x^n,y^n)\in {\cal A} \cap ({\cal X}_T^n \times ({\cal Y}^n\backslash {\cal B}) )} P_n^T(x^n) W_{x^n}(y^n) \notag 
            \\& = \sum_{y^n \in  {{\cal T}_{W_T,\beta}^n}:\atop W_{P_n^T}^\mathcal{A}(y^n)<\frac{\tau}{|{\cal T}_{W_T,\beta}^n|}
            } \sum_{x^n\in{\cal X}_T^n} P_n^T(x^n) W_{x^n}(y^n) \mathbf{1}[(x^n, y^n)\in\mathcal{A}] \notag 
            \\&= \sum_{y^n \in  {{\cal T}_{W_T,\beta}^n}:\atop W_{P_n^T}^\mathcal{A}(y^n)<\frac{\tau}{|{\cal T}_{W_T,\beta}^n|}} W_{P_n^T}^\mathcal{A}(y^n)  
             \quad \leq  \tau.
    \end{align}
    Substituting equation \eqref{eq:sec5:smoothing2} and \eqref{eq:sec5:smoothing3} to equation \eqref{eq:sec5:smoothing1}, we get 
        $$\sum_{(x^n,y^n)\in{\cal S}^c} P_n^T(x^n) W_{x^n}(y^n) \leq \tau + \delta.$$
    Applying Theorem \ref{the:Single-shot} for equation \eqref{eq:sec5:D_max} and this equation, 
    we get Theorem \ref{the:chaResol_fixedType}.
\end{proof}


\section{Asymptotic result for general input distribution}

In this section, we consider channel resolvability for a general input distribution. Let ${\cal T}$ be the set of all types. For a given distribution $P_n$ on ${\cal X}^n$, we consider the type distribution
\begin{equation}
    \label{eq:definition_of_type_distribution}
    P_{{\cal T}}(T) := P_n({\cal X}_T^n)
\end{equation}
The type distribution distributes a general input distribution by each type $T$ as 
$$P_n (x^n) = \sum_{T\in{\cal T}}{P_{\cal T}(T)}{P_n^T(x^n)},$$
for each symbol $x^n \in {\cal X}^n$ where 
$$P_n^T(x^n) := \frac{P_n(x^n)}{P_n({\cal X}_T^n)} \mathbf{1}[x^n \in {\cal X}_T^n].$$
We first simulate the type distribution $P_{\cal T}$.  For readers' convenience, we provide a proof of the proposition in Appendix \ref{Proof:Proposition}; This simulation is known as source resolvability, which is already proved by deterministic code construction \cite{Han03_Information-Spectrum,Uematsu10, Nomura20}. However, we prove source resolvability using the MWU algorithm so that our channel resolvability code is constructed solely by the MWU algorithm.
\begin{proposition}(Source resolvability with type)
\label{propo:sourcdResolv}  For type distribution $P_{\cal T}$, arbitrary small $\varepsilon'\in (0,\frac{1}{2}],$ small positive $ \tau'>0$, and block length $n$, the codebook ${\cal C'} =\{T_1, T_2, ..., T_{L_n'} \}\subset {\cal T}$ obtained by the MWU algorithm makes a distribution
    $$R_{{\cal C}'}(T) = \frac{1}{L_n'} \sum_{l=1}^{L_n'} \mathbf{1}[T_l = T],$$ such that
    $$d_\mathrm{var}(P_{\cal T}, R_{{\cal C}'})\leq 2\varepsilon' + \tau',$$
    as long as the size $L_n'$ satisfies
    \begin{equation}
    \label{eq:L_n'}
        L_n' \geq \frac{(n+1)^{|{\cal X}|} |{\cal X}|\ln{(n+1)}}{\tau \varepsilon'^2}.
    \end{equation}
\end{proposition}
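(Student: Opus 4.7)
The plan is to reduce source resolvability to the single-shot channel resolvability bound in Theorem \ref{the:Single-shot} by viewing the source as an identity channel on the type set ${\cal T}$. Concretely, I would take both the ``input'' and ``output'' alphabets to be ${\cal T}$, the input distribution to be $P_{\cal T}$, and the channel to be the deterministic identity $W_T(T') = \mathbf{1}[T' = T]$. Under this identification one has $W_{P_{\cal T}} = P_{\cal T}$ and $W_{{\cal C}'}(T) = R_{{\cal C}'}(T)$, so the total-variation distance in the proposition coincides exactly with $d_\mathrm{var}(W_{P_{\cal T}}, W_{{\cal C}'})$ in Theorem \ref{the:Single-shot}.

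Second, I would pick the smoothing set so that types of tiny probability are excluded from the cost. A natural choice is
$${\cal S} = \bigl\{(T,T) : P_{\cal T}(T) \geq \tau'/|{\cal T}|\bigr\}.$$
Because the channel is the identity, on $\mathsf{s}(W_{P_{\cal T}}^{\cal S})$ the cost ratio simplifies to $W_T^{\cal S}(T')/W_{P_{\cal T}}^{\cal S}(T') = \mathbf{1}[T'=T]/P_{\cal T}(T')$, and therefore
$$D_\mathrm{max} = \max_{T : P_{\cal T}(T) \geq \tau'/|{\cal T}|} \ln \frac{1}{P_{\cal T}(T)} \leq \ln\frac{|{\cal T}|}{\tau'}.$$
The residual mass outside ${\cal S}$ is easy to control:
$$\sum_{(T,T') \in {\cal S}^c} P_{\cal T}(T) W_T(T') \;\leq\; \sum_{T : P_{\cal T}(T) < \tau'/|{\cal T}|} P_{\cal T}(T) \;\leq\; \tau'.$$

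Third, I would invoke the standard type-counting bound $|{\cal T}| \leq (n+1)^{|{\cal X}|}$, which yields $\ln |{\cal T}| \leq |{\cal X}| \ln(n+1)$ and $e^{D_\mathrm{max}} \leq (n+1)^{|{\cal X}|}/\tau'$. Plugging these into the codebook-size hypothesis $L_n' \geq e^{D_\mathrm{max}} \ln |{\cal T}|/\varepsilon'^2$ of Theorem \ref{the:Single-shot} reproduces the bound \eqref{eq:L_n'}, and the theorem then gives $d_\mathrm{var}(P_{\cal T}, R_{{\cal C}'}) \leq 2\varepsilon' + \tau'$ as claimed.

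The only genuine obstacle is the bookkeeping of the reduction itself: one must verify that Definition \ref{def:cost}, Definition \ref{def:X_strategy} and Lemma \ref{lem:X_strategy} remain meaningful for the deterministic identity channel (they do, because the greedy maximisation over $T$ is well-defined and the cost on the diagonal equals $e^{-D_\mathrm{max}}/P_{\cal T}(T)$), and that the codebook produced by the MWU algorithm is then genuinely a list of types $T_1,\ldots,T_{L_n'} \in {\cal T}$. Once this dictionary is spelled out, the entire proposition is a direct specialisation of Theorem \ref{the:Single-shot} and requires no new probabilistic ingredient.
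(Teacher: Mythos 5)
Your proposal is correct and matches the paper's own proof in Appendix B almost verbatim: the paper likewise applies Theorem \ref{the:Single-shot} to the identity channel $\mathbf{1}[T'=T]$ with input $P_{\cal T}$, truncates on the set ${\cal D}=\{T: P_{\cal T}(T)\geq \tau'/|{\cal T}|\}$ (taking ${\cal S}={\cal D}\times{\cal D}$, which for the identity channel is equivalent to your diagonal choice), obtains $D_\mathrm{max}\leq \ln(|{\cal T}|/\tau')$ and residual mass at most $\tau'$, and finishes with $|{\cal T}|\leq (n+1)^{|{\cal X}|}$. The only discrepancy is the paper's apparent typo of $\tau$ for $\tau'$ in the size bound \eqref{eq:L_n'}, which your version silently corrects.
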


Select one of $L_n'$ types uniformly at random.  For the chosen type $T$, select a codeword uniformly at random from  $L_n$ codewords. 
In Proposition \ref{propo:sourcdResolv}, sufficiently large $L_n'$ ensures the outcome distribution approximates $P_{\cal T}$. Similarly, in Theorem \ref{the:chaResol_fixedType}, sufficiently large $L_n$ ensures the output distribution approximates the one induced by $P_n^T(x^n)$. These insights lead to channel resolvability for a general input.

\begin{theorem}(Channel resolvability for general input distribution)
    \label{the:Main_result}
     For a general input distribution $P_n$, a given i.i.d. channel $W,$ and small constants $\eta, \kappa$, the codebook ${\cal C}_n$ obtained by the MWU algorithm satisfies that
    $$d_\mathrm{var}(W_{P_n}, W_{{\cal C}_n}) \leq \eta,$$
    for sufficiently large $n$, as long as the size $L_n''$ satisfies that 
    $$L_n''\geq \exp (n(C(W)+ \kappa)).$$
\end{theorem}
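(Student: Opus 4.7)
The plan is to construct the codebook in two layers, combining Proposition \ref{propo:sourcdResolv} with Theorem \ref{the:chaResol_fixedType}. Starting from the decomposition
\[
P_n(x^n) = \sum_{T\in\mathcal{T}} P_\mathcal{T}(T)\, P_n^T(x^n),
\]
I would first run the MWU-based source resolvability of Proposition \ref{propo:sourcdResolv} on the type distribution $P_\mathcal{T}$ to obtain a list of types $\mathcal{C}' = \{T_1,\ldots,T_{L_n'}\}$ whose empirical distribution $R_{\mathcal{C}'}$ satisfies $d_\mathrm{var}(P_\mathcal{T}, R_{\mathcal{C}'}) \le 2\varepsilon' + \tau'$. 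Then, for each selected type $T_i$, I would invoke Theorem \ref{the:chaResol_fixedType} with input distribution $P_n^{T_i}$ to obtain a fixed-type codebook $\mathcal{C}_n^{T_i}\subseteq \mathcal{X}_{T_i}^n$ of size $L_n$. The final codebook is the concatenation $\mathcal{C}_n = \bigcup_{i=1}^{L_n'} \mathcal{C}_n^{T_i}$, of total size $L_n'' = L_n' L_n$.

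The induced output distribution factors as $W_{\mathcal{C}_n} = \sum_T R_{\mathcal{C}'}(T)\, W_{\mathcal{C}_n^T}$, while the target factors as $W_{P_n} = \sum_T P_\mathcal{T}(T)\, W_{P_n^T}$. Inserting the intermediate mixture $\sum_T R_{\mathcal{C}'}(T)\, W_{P_n^T}$ and applying the triangle inequality together with convexity of total variation gives
\begin{align*}
d_\mathrm{var}(W_{P_n}, W_{\mathcal{C}_n})
&\le d_\mathrm{var}(P_\mathcal{T}, R_{\mathcal{C}'}) \\
&\quad + \sum_{T} R_{\mathcal{C}'}(T)\, d_\mathrm{var}(W_{P_n^T}, W_{\mathcal{C}_n^T}) \\
&\le (2\varepsilon' + \tau') + (2\varepsilon + \tau + \delta),
\end{align*}
where the bound on the second term uses Theorem \ref{the:chaResol_fixedType} for each type in the support of $R_{\mathcal{C}'}$. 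Choosing $\varepsilon, \varepsilon', \tau, \tau', \delta$ small enough makes the total at most $\eta$.

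For the rate, I would exploit the fact that $I(T,W) \le C(W)$ for every type $T$, so taking
\[
L_n \ge \frac{\exp\!\Big(n(C(W) + \alpha' + \beta') - \ln\tau\Big)\, n\ln|\mathcal{Y}|}{\varepsilon^2}
\]
satisfies the hypothesis of Theorem \ref{the:chaResol_fixedType} simultaneously for every selected type. Since $L_n'$ in Proposition \ref{propo:sourcdResolv} is only polynomial in $n$, we obtain $\ln L_n'' = \ln L_n + \ln L_n' \le n(C(W) + \alpha' + \beta') + O(\log n) \le n(C(W) + \kappa)$ for sufficiently large $n$, once $\alpha',\beta'$ are chosen small relative to $\kappa$.

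The main obstacle will be ensuring \emph{uniformity} across types: a single choice of $L_n$ together with the smoothing parameters $\alpha,\beta,\tau,\delta$ must make Theorem \ref{the:chaResol_fixedType} applicable to every type that the source resolvability step may select. Because the bounds \eqref{eq:sec5:weak_conditional} and \eqref{eq:sec5:number_typical} hold with constants depending only on $\alpha,\beta$ and not on $T$, and because $I(T,W)$ is uniformly bounded by $C(W)$, this uniformity is achieved by taking the worst-case parameters, which adds only a polynomial overhead that does not affect the exponential rate.
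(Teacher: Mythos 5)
Your proposal is correct and follows essentially the same route as the paper: a two-layer construction combining Proposition~\ref{propo:sourcdResolv} on the type distribution with Theorem~\ref{the:chaResol_fixedType} per type, a triangle-inequality/convexity decomposition of the total variation, and a uniform worst-case rate via $C(W)=\max_T I(T,W)$ with the polynomially many types contributing only negligible rate overhead. The only cosmetic difference is that you insert the intermediate mixture $\sum_T R_{{\cal C}'}(T)W_{P_n^T}$ (weighting the per-type errors by $R_{{\cal C}'}$) whereas the paper uses $\sum_T P_{\cal T}(T)W_{{\cal C}_n^T}$, which is immaterial since the per-type bound is uniform over $T$.
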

\begin{proof}
    From Proposition \ref{propo:sourcdResolv}, we get type distribution $R_{{\cal C}'}$ 
    and from Theorem \ref{the:chaResol_fixedType}, we get distribution $W_{{\cal C}_n^T}$ for a fixed type $T$. 
    Thus, we get
    \begin{align}\label{eq:sec6:d_var_}
            &d_{\mathrm{var}}(W_{P_n},W_{{\cal C}_n}) \notag 
            \\& \quad = d_\mathrm{var}\bigg(\sum_{T \in {\cal T}}P_{\cal T}({T})W_{P_n^T}, \sum_{T \in {\cal T}}R_{{\cal C}'}({T})W_{{\cal C}_n^T}\bigg) \notag
            \\& \quad \leq d_\mathrm{var}\bigg(\sum_{T \in {\cal T}}P_{\cal T}({T})W_{P_n^T}, \sum_{T \in {\cal T}}P_{\cal T}({T})W_{{\cal C}_n^T}\bigg) \notag
            \\& \qquad+ d_\mathrm{var}\bigg(\sum_{T \in {\cal T}}P_{\cal T}({T})W_{{\cal C}_n^T}, \sum_{T \in {\cal T}}R_{{\cal C}'}({T})W_{{\cal C}_n^T}\bigg) \notag
            \\& \quad \leq \sum_{T \in {\cal T}} P_{\cal T}({T})d_\mathrm{var}\big(W_{P_n^T},W_{{\cal C}_n^T}\big)+d_\mathrm{var}(P_{\cal T}, R_{{\cal C}'}),
    \end{align}
where the first inequality follows from triangular inequality and the last inequality follows from data processing inequality. When we consider the worst case of type $T$ in Theorem \ref{the:chaResol_fixedType}, we get  $d_\mathrm{var}(W_{P_n^T},W_{{\cal C}_n^T}) \leq 2\varepsilon + \tau + \delta$ 
as long as the size $L_n$ satisfies that for channel capacity $C(W)=\max_T I(T,W)$,
\begin{equation*}
    \begin{split}
        L_n &\geq 
        \frac{\exp{\Big(n(C(W) + \alpha'+\beta') -\ln{\tau}\Big)}\; \ln{|{\cal Y}|^n} }{\varepsilon^2}.
    \end{split}
\end{equation*}
 Thus, we get
\begin{equation*}\label{eq:sec6:d_var}
    \begin{split}
        d_\mathrm{var}(W_{P_n},W_{{\cal C}_n}) \leq 2\varepsilon + \tau + \delta + 2\varepsilon' + \tau',
    \end{split}
\end{equation*}
as long as the size $L_n''$ satisfies that
\begin{equation*}\label{eq:sec6:L_n}
    \begin{split}
        L_n'' &\geq L_n \frac{(n+1)^{|{\cal X}|} |{\cal X}|\ln{(n+1)}}{\tau' \varepsilon'^2}.
    \end{split}
\end{equation*}
By setting $\varepsilon, \varepsilon', \tau, \tau', \delta, \alpha',\beta'$ to small values, 
for sufficiently large $n$,  
$\eta$ and $\kappa$ satisfies Theorem \ref{the:Main_result}.
\end{proof}

A condition where a input distribution $P_n$ is i.i.d. distribution alleviates Theorem \ref{the:Main_result}. For readers convenience, we provide a proof of following remark in appendix \ref{proof:Remark}.
\begin{remark}(Channel resolvability for i.i.d. input distribution)\label{Remark_}
    For i.i.d. input distribution $P_n=P^n$, in Theorem \ref{the:Main_result}, the size $L_n''$ only has to satisfy $L''_n\geq \exp(n (I(P, W)+\kappa)).$
\end{remark}


\section{Conclusion}
We prove channel resolvability using non-random coding. However, there are still two unresolved problems. The one is that we only consider discrete memoryless channels. We will consider general channels \cite{Han03_Information-Spectrum} using non-random coding. The other is to apply our result for quantum channel resolvability.


\section*{Acknowledgment}
This work was supported in part by the Japan Society for the Promotion of Science (JSPS) KAKENHI under Grant 23H00468 and 23K17455.

\newpage

\appendix
\subsection{Proof of Lemma 1}
\label{Proof_Lemma2}
    We shall calculate the upper bound and lower bound of the weight $w(L+1, y)$ summed for over all $y$. We calculate the upper bound first. Here, we have 
    \begin{equation}\label{eq:appe:1}
        \begin{split}
            w(L+1, y) &= \exp{\bigg(-\varepsilon \sum_{l=1}^L m(x_{l}, y) \bigg)}
            \\& = \exp \bigg(-\varepsilon \sum_{l=1}^{L-1} m(x_{l}, y) \bigg) \exp\big(-\varepsilon m(x_L, y)\big)
            \\&= w(L,y) \exp{\big(-\varepsilon m(x_L, y)\big)}.
        \end{split}
    \end{equation}
    Define $\varepsilon_1:= 1-e^{-\varepsilon},$ we get
    \begin{equation*}
        \exp{\big(-\varepsilon m(x_L, y)\big)} \leq 1-\varepsilon_1 m(x_L, y),
    \end{equation*}
    where the inequality follows from
    \begin{equation}\label{eq:appe:nontrivial_1}
        \exp(-\varepsilon z)\leq 1 - \varepsilon_1 z 
    \end{equation}
    which holds for $1 \geq z \geq 0$ and $1\geq \varepsilon \geq 0$; we provide a proof of inequality \eqref{eq:appe:nontrivial_1} in Appendix \ref{claim:A}.   
    Thus, we rewrite equation \eqref{eq:appe:1} as 
    \begin{equation}\label{eq:appe:2}
        \begin{split}
            w(L+1,y) \leq  w(L,y) -  \varepsilon_1 w(L,y) m(x_L, y).
        \end{split}
    \end{equation}
    We sum equation \eqref{eq:appe:2} for over all $y\in {\cal Y}$ and we get
    \begin{equation}\label{eq:appe:3}
        \begin{split}
            &\sum_{y\in{\cal Y}}w(L+1,y) \\&\quad\leq  \sum_{y\in{\cal Y}}w(L,y) 
            - \varepsilon_1\sum_{y\in {\cal Y}}  w(L,y) m(x_L,y)
            \\&\quad= \bigg(\sum_{y\in {\cal Y}}w(L,y)\bigg) \bigg(1- \varepsilon_1\sum_{y\in{\cal Y}}f(L,y) m(x_L,y) \bigg)
            \\&\quad \leq \bigg(\sum_{y\in {\cal Y}}w(L,y)\bigg) \exp \bigg(- \varepsilon_1\sum_{y\in{\cal Y}}f(L,y) m(x_L,y) \bigg),
        \end{split}
    \end{equation}
    where the second equality follows from $$f(l,y) = \frac{w(l,y)}{\sum_{y\in {\cal Y}}w(l,y)},$$ and the last inequality follows from $1+z \leq e^z$ for any real number $z$. Calculate equation \eqref{eq:appe:3} recursively, the upper bound is 
    \begin{equation} \label{eq:appe:upperbound}
        \begin{split}
            &\sum_{y\in{\cal Y}} w(L+1,y) 
            \\& \quad \leq \bigg(\sum_{y\in {\cal Y}}w(L,y) \bigg)\exp \bigg(-\varepsilon_1 \sum_{y\in{\cal Y}}f(L,y) m(x_L,y) \bigg)
            \\& \quad\leq \bigg(\sum_{y\in {\cal Y}}w(L-1,y) \bigg)
            \\ & \qquad \quad \exp \bigg(-\varepsilon_1 \sum_{l=L-1}^L \sum_{y\in{\cal Y}} f(l,y) m(x_l,y) \bigg)
            \\&\quad \leq   \cdots
            \\& \quad\leq \bigg(\sum_{y\in {\cal Y}}w(1,y) \bigg)\exp \bigg(-\varepsilon_1 \sum_{l=1}^L\sum_{y\in{\cal Y}} f(l,y) m(x_l,y) \bigg)
            \\&\quad = |{\cal Y}| \exp \bigg(-\varepsilon_1 \sum_{l=1}^L\sum_{y\in{\cal Y}} f(l,y) m(x_l,y) \bigg),
        \end{split}
    \end{equation}
    where the last equality follows from $\sum_{y\in{\cal Y}} w(1,y) = \sum_{y\in{\cal Y}}1 = |{\cal Y}|.$
    On the other hand, the lower bound is 
    \begin{align}\label{eq:appe:lowerbound}
            \sum_{y\in {\cal Y}} w(L+1,y) &= \sum_{y\in {\cal Y}} \exp{\bigg(-\varepsilon \sum_{l=1}^L m(x_l, y)\bigg)} \notag
            \\ & \geq \exp{\bigg(-\varepsilon \sum_{l=1}^L m(x_l, y)\bigg)},
    \end{align}
    for every $y\in {\cal Y}$ where the inequality follows that $e^z$ is positive for any real number $z$. Combining the upper bound of equation \eqref{eq:appe:upperbound} and the lower bound of equation \eqref{eq:appe:lowerbound}, we get
    \begin{equation*}
        \begin{split}
            &\exp\bigg(-\varepsilon \sum_{l=1}^L m(x_l, y) \bigg) 
            \\& \quad \leq 
        |{\cal Y}| \exp \bigg( -\varepsilon_1 \sum_{l=1}^L \sum_{y'\in{\cal Y}}f(l,y)m(x_l, y')\bigg),
        \end{split} 
    \end{equation*}
    for every $y\in {\cal Y}$. Simplify the above equation with
    \begin{equation}
    \label{eq:appe:nontrivial_2}
        \varepsilon_1 = 1 - e^{-\varepsilon} \geq \varepsilon(1-\varepsilon),
    \end{equation}
    which holds for $\frac{1}{2}\geq \varepsilon \geq 0$; we provide a proof of inequality \eqref{eq:appe:nontrivial_2} in Appendix \ref{claim:B}. And taking the natural logarithm and dividing by $-\varepsilon$, we get
    \begin{equation*}
        \begin{split}
            &\sum_{l = 1}^L m(x_l,y) 
            \\ & \qquad \geq (1-\varepsilon) \sum_{l=1}^L\bigg(\sum_{y'\in {\cal Y}} f(l,y)m(x_l,y') \bigg) - \frac{\ln|{\cal Y}|}{\varepsilon}.
        \end{split}
    \end{equation*}
\\\qed

\subsubsection{Proof of nontrivial inequality \eqref{eq:appe:nontrivial_1}}
    \label{claim:A}
    We define 
    $$f(\varepsilon) := \exp{(-\varepsilon z)} - (1 - (1-\exp{(-\varepsilon)}) z).$$
    Taking the derivatives of $f(\varepsilon)$ with respect to $\varepsilon$, we get
    $$f'(\varepsilon) = - z\exp{(-\varepsilon z)} + z\exp{(-\varepsilon)} \leq 0,$$
    where the inequality follows from $\exp{(-\varepsilon z)} \geq \exp{(-\varepsilon)}$ for $1\geq z \geq 0.$
    We get
    $$f(\varepsilon)\leq f(0) = 0.$$
    Thus, we get the inequality \eqref{eq:appe:nontrivial_1} such as
    $$\exp{(-\varepsilon z)} \leq  1 - (1-\exp{(-\varepsilon)}) z.$$
\qed

\subsubsection{Proof of nontrivial inequality \eqref{eq:appe:nontrivial_2}}
\label{claim:B}
We define 
$$f(\varepsilon) := 1- \exp{(-\varepsilon)} - \varepsilon(1-\varepsilon).$$
Taking the first and second derivatives of $f(\varepsilon)$ with respect to $\varepsilon$, we get
$$f'(\varepsilon) = \exp{(-\varepsilon)} -1 + 2\varepsilon,$$
and
$$f''(\varepsilon) = -\exp{(-\varepsilon)} + 2.$$
Because $f''(\varepsilon)>0$ for $\frac{1}{2}\geq \varepsilon \geq 0$, we get 
$$f'(\varepsilon) \geq f'(0) = 0.$$
Then, we get
$$f(\varepsilon)\geq f(0) = 0.$$
We get the inequality \eqref{eq:appe:nontrivial_2} such as
$$ 1 - e^{-\varepsilon} \geq \varepsilon(1-\varepsilon).$$
\qed

\subsection{Proof of Proposition \ref{propo:sourcdResolv}}
    \label{Proof:Proposition}
    We consider type $T\in {\cal T}$ and define a set
    \begin{equation}
    \label{eq:appe:set_D}
        {\cal D} := \bigg\{T \in {\cal T} : P_{\cal T}(T) \geq \frac{\tau'}   {|{\cal T}|} \bigg\}.
    \end{equation}
    We shall apply Theorem \ref{the:Single-shot} for type distribution $P_{\cal T}$ as an input distribution $P$, a noiseless channel $\mathbf{1}[T'=T]$ as a channel $W$, and ${\cal D}\times {\cal D}$ as arbitrary set ${\cal S}$ respectively.

    First, we calculate what corresponds to $D_\mathrm{max}$. 
    A truncated measure with noiseless channel is $\mathbf{1}[T'=T]\; \mathbf{1}[(T,T')\in({\cal D\times D})]$. 
    The truncated measure with type distribution $P_{\cal T}$ satisfies that
    \begin{align*}
        P_{\cal T}^{\cal D}(T')&= \sum_{T\in{\cal T}}P_{\cal T}(T) \mathbf{1}[T'=T] \mathbf{1}[(T,T')\in({\cal D\times D})]
        \\& = P_{\cal T}(T') \mathbf{1}[T'\in{\cal D}].
    \end{align*}
    The set ${\cal D}$ is equivalent to support set $\mathsf{s}(P_{\cal T}^{\cal D})$.
    Thus, we get
    \begin{equation}\label{eq:prop:D_max}
        \begin{split}
            D_\mathrm{max} &= \max_{(T,T')\in ({\cal D\times D})} \ln \frac{\mathbf{1} [T'=T] \; \mathbf{1}[(T,T')\in({\cal D\times D})]}{P_{\cal T}^{\cal D}(T')}
            \\&= \max_{T\in{\cal D}} \ln \frac{1}{P_{\cal T}^{\cal D}(T)}
            \\&\leq \ln{\frac{|{\cal T}|}{\tau'}}.
        \end{split}
    \end{equation}
    where the inequality follows from definition \eqref{eq:appe:set_D}.
    
    Next, we calculate sum of complements and get
    \begin{align}\label{eq:prop:smoothing}
        \sum_{(T,T')\in({\cal D}^c, {\cal D}^c)} P_{\cal T}(T)\mathbf{1}[T'=T] &= \sum_{T\in{{\cal D}^c}} P_{\cal T}(T) \notag \\
        &\leq \sum_{T\in{{\cal D}^c}} \frac{\tau'}{|{\cal T}|} \notag \\
        &\leq \tau'
    \end{align}
    where the first inequality follows from definition \eqref{eq:appe:set_D}.
    Applying Theorem \ref{the:Single-shot} for equation \eqref{eq:prop:D_max} and \eqref{eq:prop:smoothing}, the codebook ${\cal C'} =\{T_1, T_2, ..., T_{L'} \}\subset {\cal T}$ obtained by the MWU algorithm makes a distribution $R_{{\cal C}'}(T)$ such as
    $$R_{{\cal C}'}(T) = \frac{1}{L'} \sum_{l=1}^{L'} \mathbf{1}[T_l = T].$$
    The distribution $R_{{\cal C}'}(T)$ satisfies 
    \begin{equation}\label{eq:prop:last}
        d_\mathrm{var}(P_{\cal T}, R_{{\cal C}'})\leq 2\varepsilon' + \tau',
    \end{equation}
    as long as the size $L'$ satisfies 
    $$L' \geq \frac{|{\cal T}| \ln{|{\cal T}|}}{\tau' \varepsilon'^2}.$$
    Note that the total number of types is at most $(n+1)^{|{\cal X}|}$. Thus, equation \eqref{eq:prop:last} satisfies as long as the size $L'$ satisfies equation \eqref{eq:L_n'}.\\
\qed

\subsection{Proof of Remark \ref{Remark_}}
\label{proof:Remark}
We consider i.i.d. input distribution $P_n=P^n$ in Theorem \ref{the:Main_result}. Because of the type definition, the number of sequences $|{\cal X}_T^n|$ in a type satisfies that
\begin{equation}
    \label{eq:number_of_sequences_in_type}
    |{\cal X}_T^n|\leq \exp{(nH(T))}.
\end{equation}
Consider i.i.d. distribution $P_n=P^n$. The probability of $P_n(x^n)$ for $x^n\in{\cal X}_T^n$ satisfies that
\begin{align}
\label{eq:i_i_d}
    P_n(x^n) &= \prod_{a\in{\cal X}} P(a)^{N(a|x^n)} \notag
    \\&= \exp{\bigg(\sum_{a\in{\cal X}} n \bigg(\frac{1}{n}N(a|x^n)\ln{P(a)}  \bigg) \bigg)}\notag
    \\&= \exp{\bigg(n\sum_{a\in{\cal X}} T(a)\ln{P(a)} \bigg)}\notag
    \\&= \exp{\Big(-n\big(H(T)+D(T||P)\big) \Big)}
\end{align}
where $N(a|x^n)$ outputs the number of occurrences of $a$ in $x^n$, the second last equality follows that $\frac{1}{n}N(a|x^n)=T(a)$, and $D(T||P)$ is KL-divergence. From the definition of type distribution \eqref{eq:definition_of_type_distribution}, combining equation \eqref{eq:number_of_sequences_in_type} and equation \eqref{eq:i_i_d}, we get
\begin{equation}
    \label{eq:P_n(X_T^n)}
    P_{\cal T}(T) = P_n({\cal X}_T^n) \leq \exp{(-nD(T||P))}.
\end{equation}

Consider to apply equation \eqref{eq:P_n(X_T^n)} to 
$$\sum_{T\in{\cal T}}P_{\cal T}(T) d_\mathrm{var}(W_{P_n^T},W_{{\cal C}_n^T}),$$
in equation \eqref{eq:sec6:d_var_}.
For an arbitrary small positive $\nu$, we divide all types into types which satisfies that $D(T||P) > \nu$ and the other types. For the types which satisfies that $D(T||P) > \nu$, because the total number of types is at most $(n+1)^{|{\cal X}|}$, we get $$(n+1)^{|{\cal X}|}\exp{(-nD(T||P))} < (n+1)^ {|{\cal X}|} \exp{(-n \nu)} \;\leq \varepsilon, $$
for sufficiently large $n$. Thus applying this relation to equation \eqref{eq:sec6:d_var_}, we get
\begin{align*}
    d_\mathrm{var}(W_{P},W_{\cal C}) &\leq \sum_{T\in {\cal T}:\atop D(T||P)\leq \nu}P_{\cal T}(T) d_\mathrm{var}(W_{P_n^T},W_{{\cal C}_n^T}) \\&\quad+ (n+1)^ {|{\cal X}|} \exp{(-n \nu)} + d_\mathrm{var}(P_{\cal T},R_{\cal C'}).
\end{align*}
On the other hand, for $D(T||P)\leq \nu$, Pinsker's inequality and data processing inequality imply that
\begin{align}
    \label{eq:pinsker's}
    d_\mathrm{var}(W_P,W_T)\leq d_\mathrm{var}(P,T) \leq \sqrt{\frac{1}{2}\nu}.
\end{align}
For $x^*=\underset{x}{\mathrm{argmax}}\;H(W_x)$, we get
\begin{align}
    \label{eq:Muture_1}
    |H(W|P)-H(W|T)| &\leq \sum_{x\in\mathcal{X}} |P(x)-T(x)| H(W_x) \notag
    \\& \leq  2\bigg(\frac{1}{2}\sum_{x\in\mathcal{X}} |P(x)-T(x)| H(W_{x^*})\bigg) \notag
    \\&\leq \sqrt{2\nu} H(W_{x^*}) \notag
    \\&\leq \sqrt{2\nu} \ln{|\mathcal{Y}|},
\end{align}
where the third inequality follows from equation \eqref{eq:pinsker's} and the last inequality follows from the upper bound of entropy. Continuity of entropy in \cite[Lemma 2.7]{Csiszar-Korner11} implies that for $d_\mathrm{var}(W_P,W_T)\leq \frac{1}{2}$,
\begin{align}
    \label{eq:Muture_2}
    |H(W_P)-H(W_T)|&\leq -d_\mathrm{var}(W_P,W_T) \ln{\frac{d_\mathrm{var}(W_P,W_T)}{|\mathcal{Y}|}} \notag
    \\ &\leq \sqrt{\frac{1}{2}\nu} \ln{{|\mathcal{Y}|}} - \sqrt{\frac{1}{2}\nu} \ln{\sqrt{\frac{1}{2}\nu}},
\end{align}
where the last inequality follows that $-t \ln \frac{t}{|{\cal Y}|}$ is a monotonically non-decreasing function for $0\leq t\leq \frac{1}{2}$ with equation \eqref{eq:pinsker's}.
From equation \eqref{eq:Muture_1} and \eqref{eq:Muture_2}, we get 
\begin{align*}
    &|I(T,W) - I(P,W)|
    \\&\quad\leq |H(W_P)-H(W_T)| +|H(W|P)-H(W|T)|
    \\&\quad \leq \frac{3\sqrt{2\nu}}{2} \ln{{|\mathcal{Y}|}} - \sqrt{\frac{1}{2}\nu} \ln{\sqrt{\frac{1}{2}\nu}}.
\end{align*}
Considering the types where $I(T,W)$ approximates $I(P,W)$ in Theorem \ref{the:chaResol_fixedType}, we get $$d_\mathrm{var}(W_{P_n^T},W_{{\cal C}_n^T})\leq 2\varepsilon + \tau +\delta,$$ as long as the size $L_n$ satisfies that 
\begin{align*}
    &L_n \geq \;\exp{\bigg(n\bigg(I(P,W) +\frac{3\sqrt{2\nu}}{2} \ln{|\mathcal{Y}|} }
    \\& \qquad\quad \qquad - \sqrt{\frac{1}{2}\nu} \ln {\sqrt{\frac{1}{2}\nu}} + \alpha'+\beta'\bigg) -\ln{\tau}\bigg) 
    \times\frac{\ln{|{\cal Y}|^n} }{\varepsilon^2}.
\end{align*}
where the inequality follows from $I(T,W)\leq I(P,W)+\frac{3\sqrt{2\nu}}{2} \ln{|\mathcal{Y}|} - \sqrt{\frac{1}{2}\nu} \ln {\sqrt{\frac{1}{2}\nu}}$.
In the same flow as Theorem \ref{the:Main_result}, 
by setting $\varepsilon,\varepsilon', \tau, \tau', \delta,\alpha', \beta', \nu$ to small values, for sufficiently large $n$, the sum of small values is less than $\kappa$.
Thus, we get the codebook ${\cal C}_n$ with size $L_n''$ which only satisfies
$$L_n'' \geq \exp\Big(n\big(I(P,W)+\kappa\big)\Big).$$
\qed

\bibliographystyle{plain}
\bibliography{sample}

\end{document}